\begin{document}

\mainmatter  

\title{Path homology and temporal networks
}

\titlerunning{Path homology and temporal networks}

%
%
\author{Samir Chowdhury\inst{1},
Steve Huntsman\inst{2}, 
Matvey Yutin\inst{3}
}
\authorrunning{S. Chowdhury, S. Huntsman and M. Yutin}

\institute{
Stanford University 
(\email{samirc@stanford.edu})
\and
BAE Systems FAST Labs 
\and
University of California, San Diego 
(\email{myutin@ucsd.edu})
}

%
%

\toctitle{Path homology and temporal networks}
\tocauthor{S. Chowdhury and S. Huntsman and M. Yutin}
\maketitle

\begin{abstract}
We present an algorithm to compute path homology for simple digraphs, and use it to topologically analyze various small digraphs \emph{en route} to an analysis of complex temporal networks which exhibit such digraphs as underlying motifs. 
The digraphs analyzed include all digraphs, directed acyclic graphs, and undirected graphs up to certain numbers of vertices, as well as some specially constructed cases. Using information from this analysis, we identify small digraphs contributing to path homology in dimension $2$ for three temporal networks, and relate these digraphs to network behavior. We conclude that path homology can provide insight into temporal network structure and \emph{vice versa}.
\end{abstract}

\section{\label{sec:Introduction}Introduction}

Path homology of digraphs (briefly recalled in \S \ref{sec:PathHomology}) was introduced in \cite{Grigoryan2012} and concurrent papers \cite{Grigoryan2014,Grigoryan2014b,Grigoryan2015,Grigoryan2017,Grigoryan2018,Grigoryan2018b}, and it is increasingly being studied by practitioners of applied topology \cite{Chowdhury2018,phmlp,dey2020efficient,huntsman-cyclo,lin2019weighted}. However, progress in empirical applications of path homology to digraph-structured data in general, especially large and/or complex digraphs, is obstructed by the difficulty of relating the algebraic information in homology to the combinatorial structure of the underlying digraph. 

To help remedy this, in \S \ref{sec:Algorithm} we present an algorithm and implementation for computing path homology in arbitrary dimension with
capabilities for symbolic output. We then use this algorithm in \S \ref{sec:Phenomenology} to compute the path homologies of digraphs from the following families: (1) all digraphs on $\leq 4$ vertices, (2) all directed acyclic graphs on $\leq 6$ vertices, (3) all undirected graphs on $\leq 6$ vertices, (4) Erd\H{o}s-R\'{e}nyi random graphs, and (5) a family of graphs conjectured to exhibit torsion. These examples build intuition and provide digraph constructions with surprising path homologies. 

Finally, we use this algorithm in \S \ref{sec:TemporalNetworks} to compute path homologies of digraphs obtained from three temporal networks, identifying salient motifs of the sort found in \S \ref{sec:Phenomenology} and relating them to the broader network behavior, before making concluding remarks in \S \ref{sec:Remarks}.

\section{\label{sec:PathHomology}Path homology}\label{Path}

We outline \emph{path homology}\index{homology!path} as treated in \cite{Grigoryan2012,Chowdhury2018}. 
For additional background on topology, including the theory of simplicial homology that shares many similarities with path homology, see \cite{GhristEAT,Hatcher}.

For a loopless digraph $D = (V,A)$, the set $\mathcal{A}_p(D)$ of \emph{allowed $p$-paths} is
\begin{equation}
\label{eq:allowed}
\{(v_0,\dots,v_p) \in V^{p+1} : (v_{j-1},v_j) \in A, 1 \le j \le p\}.
\end{equation}
By convention, we take $\mathcal{A}_{0} := V$, $V^0 \equiv \mathcal{A}_{-1} := \{0\}$ and $V^{-1} \equiv \mathcal{A}_{-2} := \varnothing$. For a field $\mathbb{F}$
\footnote{
Path homology can also be defined over rings such as $\mathbb{Z}$ without any modifications besides a change of notation. This definition gives additional power: M. Yutin has exhibited digraphs on as few as six vertices that have torsion: see Fig. \ref{fig:torsion}.
}
and a finite set $X$, let $\mathbb{F}^X \cong \mathbb{F}^{|X|}$ be the free $\mathbb{F}$-vector space on $X$, taking $\mathbb{F}^\varnothing := \{0\}$. The \emph{non-regular boundary operator} $\partial_{[p]} : \mathbb{F}^{V^{p+1}} \rightarrow \mathbb{F}^{V^p}$ is the linear map acting on the standard basis as
\begin{equation}
\label{eq:preboundary}
\partial_{[p]} e_{(v_0,\dots,v_p)} = \sum_{j=0}^p (-1)^j e_{\nabla_j (v_0,\dots,v_p )},
\end{equation}
where $\nabla_j (v_0,\dots,v_p )$ notation means that $v_j$ is dropped from the tuple.
A few lines of algebra suffice to confirm that $\partial_{[p-1]} \circ \partial_{[p]} \equiv 0$, so $(\mathbb{F}^{V^{p+1}},\partial_{[p]})$ is a \emph{chain complex}, i.e., $\partial_{p-1} \circ \partial_p \equiv 0$, as indicated in Fig. \ref{fig:ChainComplex}. 

This is important because any chain complex $(C_p,\partial_p)$ gives rise to an invariant called \emph{homology} which behaves nicely with respect to maps on the chain complex induced by an underlying transformation of a common structure (here, a digraph). Writing $Z_p := \text{ker } \partial_p$ and $B_p := \text{im } \partial_{p+1}$, the dimension $p$ homology of the chain complex $(C_p,\partial_p)$ is the quotient
\begin{equation}
H_p := Z_p/B_p.
\end{equation}
This is a finitely generated abelian group, thus of the form $\mathbb{Z}^{\beta_p} \oplus T_p$, where the \emph{torsion} $T_p$ is a finite abelian group. Over a field, the $C_p$ are vector spaces and the torsion vanishes, in which case the \emph{Betti numbers} $\beta_p := \text{dim } H_p = \text{dim } Z_p - \text{dim } B_p$ fully capture the homology up to isomorphism. 

\begin{figure}[htb]
\begin{center}
\includegraphics[trim = 60mm 114mm 60mm 116mm, clip, keepaspectratio, width=.67\textwidth]{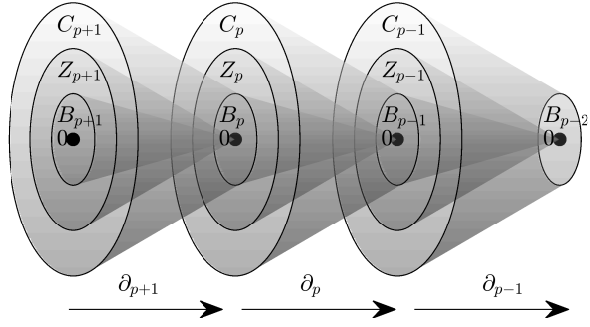}
\caption[Schematic of a chain complex]{Schematic of a chain complex. Here $C_p$, $B_{p-1}$, and $Z_p$ are respectively the domain, codomain, and kernel of $\partial_p$, so that the homology $H_p := Z_p/B_p$ is well defined.}
\label{fig:ChainComplex}
\end{center}
\end{figure}

Path homology comes from a chain complex derived from $(\mathbb{F}^{V^{p+1}},\partial_{[p]})$. Set
\begin{equation}
\label{eq:invariant}
\Omega_p := \left \{ \omega \in \mathbb{F}^{\mathcal{A}_{p}} : \partial_{[p]} \omega \in \mathbb{F}^{\mathcal{A}_{p-1}} \right \},
\end{equation}
$\Omega_{-1} := \mathbb{F}^{\{0\}} \cong \mathbb{F}$, and $\Omega_{-2} := \mathbb{F}^{\varnothing} = \{0\}$. We have that $\partial_{[p]} \Omega_p \subseteq \mathbb{F}^{\mathcal{A}_{p-1}}$, so $\partial_{[p-1]} \partial_{[p]} \Omega_p = 0 \in \mathbb{F}^{\mathcal{A}_{p-2}}$ and $\partial_{[p]} \Omega_p \subseteq \Omega_{p-1}$. The \emph{(non-regular) path complex} of $D$ is accordingly defined as the chain complex $(\Omega_p,\partial_p)$, where $\partial_p := \partial_{[p]}|_{\Omega_p}$.
\footnote{
The implied \emph{regular path complex} prevents a directed 2-cycle from having nontrivial 1-homology. While \cite{Grigoryan2012} advocates regular path homology, in our view non-regular path homology is simpler, richer, and more likely useful in applications. Our rationale amounts to the convention that a directed 2-cycle should count as a ``hole.''
}
The homology of this path complex is the \emph{(non-regular) path homology} of $D$. 

As a basic example illustrating the mechanics of path homology, consider the digraphs $D_1$ and $D_2$ in Figure \ref{fig:phDigraphs}.  $\mathcal{A}_1(D_1)$ and $\mathcal{A}_1(D_2)$ are given by the digraph arcs, $\mathcal{A}_2(D_2) = \varnothing$, and $\mathcal{A}_2(D_2) = \{(w,x,z), (w,y,z)\}$. Now $\partial_{[2]}e_{(w,x,z)} = e_{(x,z)} - e_{(w,z)} + e_{(w,x)} \not\in \mathbb{F}^{\mathcal{A}_1(D_2)}$ and $\partial_{[2]}e_{(w,y,z)} = e_{(y,z)} - e_{(w,z)} + e_{(w,z)} \not\in \mathbb{F}^{\mathcal{A}_1(D_2)}$ (because the arc $w \rightarrow z$ is missing), so
\begin{align}
\partial_{[2]}(e_{(w,x,z)} - e_{(w,y,z)}) & = e_{(x,z)} - e_{(w,z)} + e_{(w,x)} - e_{(y,z)} + e_{(w,z)} - e_{(w,y)} \nonumber \\ 
& = e_{(x,z)} + e_{(w,x)} - e_{(y,z)} - e_{(w,y)} \in \mathbb{F}^{\mathcal{A}_1(D_2)}. \nonumber
\end{align}
Therefore the dimensions of the path homology vector spaces--i.e., the Betti numbers--are different: $\beta_1(D_1) = 1$ and $\beta_1(D_2) = 0$. 

\begin{figure}[htb]
\centering
\begin{tikzpicture}[->,>=stealth',shorten >=1pt,scale = 1,every node/.style={transform shape}]]
	\node (D1) at (.5,1.5) {$D_1$};
	\node [circle, draw, align=center] (w) at (0,1) {$1$};
	\node [circle, draw, align=center] (x) at (1,1) {$2$};
	\node [circle, draw, align=center] (y) at (0,0) {$3$};
	\node [circle, draw, align=center] (z) at (1,0) {$4$};
	\draw (w) to (x);
	\draw (w) to (y);
	\draw (z) to (x);
	\draw (z) to (y);
\end{tikzpicture}
\quad \quad \quad
\begin{tikzpicture}[->,>=stealth',shorten >=1pt,scale = 1,every node/.style={transform shape}]]
	\node (D2) at (.5,1.5) {$D_2$};
	\node [circle, draw, align=center] (w) at (0,1) {$1$};
	\node [circle, draw, align=center] (x) at (1,1) {$2$};
	\node [circle, draw, align=center] (y) at (0,0) {$3$};
	\node [circle, draw, align=center] (z) at (1,0) {$4$};
	\draw (w) to (x);
	\draw (w) to (y);
	\draw (x) to (z);
	\draw (y) to (z);
\end{tikzpicture}
\caption[Similar digraphs with different path homologies]{The digraph $D_1$ has trivial path homology but the digraph $D_2$ does not.}
\label{fig:phDigraphs}
\end{figure}
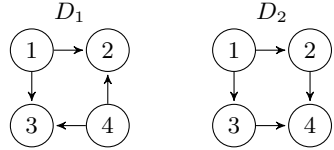

For convenience, we replace the path complex $(\Omega_p,\partial_p)$ with its \emph{reduction}
\begin{equation}
\label{eq:reduction}
\dots \Omega_{p+1} \overset{\partial_{p+1}}{\longrightarrow} \Omega_p \overset{\partial_p}{\longrightarrow} \Omega_{p-1} \overset{\partial_{p-1}}{\longrightarrow} \dots \overset{\partial_1}{\longrightarrow} \Omega_0 \overset{\tilde \partial_0}{\longrightarrow} \mathbb{F} \longrightarrow 0
\end{equation}
which (using an obvious notational device and assuming the original complex is nondegenerate) has the minor effect $\tilde H_0 \oplus \mathbb{F} \cong H_0$, while $\tilde H_p \cong H_p$ for $p > 0$. Similarly, $\tilde \beta_p = \beta_p - \delta_{p0}$, where $\delta_{jk} := 1$ iff $j = k$ and $\delta_{jk} := 0$ otherwise.

\subsection{\label{sec:Dyad}Path homologies of the mutual dyad motifs}

As a further example of practical relevance, we characterize the path homologies of a family of network motifs that we call the \emph{$n$-uplinked mutual dyads}---or dually, the $n$-downlinked mutual dyads---in reference to the original terminology going back to \cite{Milo}. Given an integer $n \geq 1$, an $n$-uplinked mutual dyad is a digraph $W_n$ with vertex set $\{a,b,1,2,\ldots,n\}$ and edge set $\{(a,b),(b,a)\}\cup \{(a,i):1\leq i \leq n\} \cup  \{(b,i):1\leq i \leq n\}$. This is illustrated in Fig. \ref{fig:EmailMotif}. The $n$-downlinked mutual dyad is defined by reversing all the arrows (cf. Fig. \ref{fig:MathOverflow1}). We have:
\begin{proposition} \label{prop:dyad}
Let $n \in \mathbb{Z}_{>0}$ and let $W_n$ denote the $n$-uplinked or downlinked mutual dyad. Then $\tilde \beta_2(W_n) = n-1$, and $\tilde \beta_p(W_n) = 0$ for all $p \geq 0, p\neq 2$.
\end{proposition}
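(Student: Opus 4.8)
Since reversing all arrows of a digraph induces an isomorphism on path homology (the tuple reversal $e_{(v_0,\dots,v_p)}\mapsto e_{(v_p,\dots,v_0)}$ intertwines the $\partial_{[p]}$ up to an overall sign and carries $\Omega_p$ onto $\Omega_p$), I would treat only the uplinked $W_n$. The plan has three parts: enumerate the allowed paths in every dimension, identify each $\Omega_p$ explicitly, and then read off the Betti numbers from the resulting very short reduced complex. For the first part: the vertices $1,\dots,n$ are sinks of $W_n$, so every allowed $p$-path with $p\geq1$ is an alternating walk on $\{a,b\}$, optionally followed by one terminal step into some $i$. Hence $|\mathcal{A}_0| = n+2$, and for all $p\geq1$ one has $|\mathcal{A}_p| = 2n+2$, the paths being $(a,b,a,\dots)$, $(b,a,b,\dots)$, and the $2n$ paths $(a,b,\dots,i)$, $(b,a,\dots,i)$ (alternating $(p-1)$-paths on $\{a,b\}$ with some $i$ appended).

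Next I would pin down $\Omega_p$. We get $\Omega_0 = \mathbb{F}^{V}$ and $\Omega_1 = \mathbb{F}^{\mathcal{A}_1}$ for free, of dimensions $n+2$ and $2n+2$, since $\partial_{[0]}$ and $\partial_{[1]}$ always land in the relevant space. For $\Omega_2$: a line of algebra gives $\partial_{[2]}e_{(a,b,i)} = e_{(b,i)} - e_{(a,i)} + e_{(a,b)}\in\mathbb{F}^{\mathcal{A}_1}$ and likewise $\partial_{[2]}e_{(b,a,i)} = e_{(a,i)} - e_{(b,i)} + e_{(b,a)}\in\mathbb{F}^{\mathcal{A}_1}$, so these $2n$ basis vectors lie in $\Omega_2$; whereas $\partial_{[2]}e_{(a,b,a)}$ and $\partial_{[2]}e_{(b,a,b)}$ carry the forbidden basis vectors $e_{(a,a)}$ and $e_{(b,b)}$, which occur in no other column of $\partial_{[2]}$, so any $\omega\in\Omega_2$ has vanishing coefficient on $e_{(a,b,a)}$ and $e_{(b,a,b)}$. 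Hence $\Omega_2 = \mathrm{span}\{e_{(a,b,i)},e_{(b,a,i)}:1\leq i\leq n\}$, of dimension $2n$.

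The crux — and where I expect the real work — is $\Omega_p = 0$ for all $p\geq3$. The mechanism: when $p\geq3$ an allowed path $v=(v_0,\dots,v_p)$ cannot have left $\{a,b\}$ by its third vertex, so $v_0,v_1,v_2$ alternate and $v_0=v_2$; hence $\mu_v := \nabla_1 v = (v_0,v_2,v_3,\dots,v_p)$ begins with two equal entries and so $\mu_v\notin\mathcal{A}_{p-1}$. I would then show $\mu_v$ arises as $\nabla_j$ of exactly one allowed $p$-path, namely $v$ itself with $j=1$: only $j=1$ can produce a tuple whose first two entries coincide (for $j=0$ or $j\geq2$ those entries are consecutive in an allowed walk, hence distinct), and once the surviving entries are matched the omitted middle vertex is forced, since it must have an out-arc and an in-arc within $\{a,b\}$ — here I use that $W_n$ restricted to $\{a,b\}$ is the directed $2$-cycle and that in the uplinked orientation $a,b$ receive no arcs from the $i$'s. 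Consequently the coefficient of $e_{\mu_v}$ in $\partial_{[p]}\omega$ equals $\pm$ the coefficient of $e_v$ in $\omega$, and $\partial_{[p]}\omega\in\mathbb{F}^{\mathcal{A}_{p-1}}$ kills every coefficient of $\omega$. Making this bookkeeping uniformly rigorous for all $p$ at once is the main obstacle; the rest is routine.

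Finally I would assemble the reduced path complex
\begin{equation}
0 \longrightarrow \mathbb{F}^{2n} \overset{\partial_2}{\longrightarrow} \mathbb{F}^{2n+2} \overset{\partial_1}{\longrightarrow} \mathbb{F}^{n+2} \overset{\tilde\partial_0}{\longrightarrow} \mathbb{F} \longrightarrow 0 . \nonumber
\end{equation}
From $\Omega_p=0$ for $p\geq3$ we get $\tilde\beta_p=0$ there, and since $\mathrm{im}\,\partial_3=0$ we have $\tilde\beta_2 = \dim\ker\partial_2$. Writing a general element as $\sum_i\gamma_i e_{(a,b,i)}+\sum_i\delta_i e_{(b,a,i)}$ and using the formulas for $\partial_2$ above, the vanishing of $\partial_2$ forces $\gamma_i=\delta_i$ for all $i$ and $\sum_i\gamma_i=0$, so $\ker\partial_2 = \{\sum_i\gamma_i(e_{(a,b,i)}+e_{(b,a,i)}):\sum_i\gamma_i=0\}$ has dimension $n-1$; thus $\tilde\beta_2=n-1$. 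Since $W_n$ is weakly connected, $\mathrm{im}\,\partial_1=\ker\tilde\partial_0$, so $\tilde\beta_0=0$. Comparing Euler characteristics of the unreduced complex, $\dim\Omega_0-\dim\Omega_1+\dim\Omega_2 = \beta_0-\beta_1+\beta_2$, i.e.\ $n = 1 - \tilde\beta_1 + (n-1)$, forces $\tilde\beta_1=0$. Together these give $\tilde\beta_2(W_n)=n-1$ and $\tilde\beta_p(W_n)=0$ for $p\neq2$, as claimed.
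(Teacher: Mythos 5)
Your proposal is correct, and its skeleton matches the paper's: compute $\Omega_p$ explicitly, show $\Omega_p=0$ for $p\geq 3$, and read $\tilde\beta_2=\dim\ker\partial_2=n-1$ from the fact that $B_2=0$. Your basis $\{\sum_i\gamma_i(e_{(a,b,i)}+e_{(b,a,i)}):\sum_i\gamma_i=0\}$ is exactly the span of the paper's generators $e_{(a,b,1)}+e_{(b,a,1)}-e_{(a,b,j)}-e_{(b,a,j)}$. Two of your steps genuinely diverge from, and improve on, the published argument. First, for $\Omega_p=0$ with $p\geq3$ the paper only asserts that ``taking linear combinations does not cancel out these non-allowed paths,'' whereas you supply the actual mechanism: since the $i$'s are sinks, $v_0=v_2\in\{a,b\}$ for any allowed $p$-path, the non-allowed face $\nabla_1 v$ has repeated initial entries, and such a face arises from exactly one column of $\partial_{[p]}$ (only $j=1$ can repeat the first two entries, and the omitted middle vertex is forced because only the other element of $\{a,b\}$ has an arc into $v_0$). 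This uniqueness is precisely what makes the coefficient of $e_{\nabla_1 v}$ equal to $-\omega_v$ and kills $\omega$, and it is the one point the paper leaves to the reader. Second, for $\tilde\beta_1=0$ the paper exhibits the cycles $e_{(a,b)}+e_{(b,a)}$ and $e_{(a,b)}+e_{(b,i)}-e_{(a,i)}$ and shows they are boundaries, but does not verify that these span $Z_1$; your Euler-characteristic computation $(n+2)-(2n+2)+2n = 1-\tilde\beta_1+(n-1)$ closes that gap cleanly once the dimensions of $\Omega_0,\Omega_1,\Omega_2$ and the vanishing of the higher $\Omega_p$ are in hand. Your reduction of the downlinked case via arrow reversal is also sound (reversal sends allowed paths to allowed paths and commutes with $\partial_{[p]}$ up to the sign $(-1)^p$, which affects neither kernels nor images), and is only a mild repackaging of the paper's closing remark.
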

\begin{proof} Suppose first that $W_n$ is the uplinked mutual dyad motif. From \cite{Grigoryan2012} we know that $\beta_0$ counts the number of connected components of the underlying undirected graph, so $\beta_0=1$ and $\tilde{\beta}_0=0$. Next consider the case $p=1$. We have $\partial_{[1]}(e_{(a,b)} + e_{(b,a)}) = 0$, but we also have $\partial_{[2]}(e_{(a,b,1)} + e_{(b,a,1)}) = e_{(a,b)} + e_{(b,a)}$, and so $e_{(a,b)} + e_{(b,a)}$ cannot contribute to $\tilde{\beta}_1$. Similarly terms of the form $e_{(a,b)} + e_{(b,i)} - e_{(a,i)} \in Z_1, \, 1\leq i \leq n,$ cannot contribute to $\tilde{\beta}_1$ as they belong to $B_1$, being the images of $e_{(a,b,i)}$ for $1\leq i \leq n$. Next consider $p\geq 3$. In these cases we find $\Omega_p = \{0\}$, as all the 3-paths have boundaries with non-allowed paths, and taking linear combinations does not cancel out these non-allowed paths. 

Finally we deal with the case $p=2$. First let $1\leq i\neq j  \leq n$. Then $\partial_{[2]}(e_{(a,b,i)} + e_{(b,a,i)} - e_{(a,b,j)} - e_{(b,a,j)}) = e_{(a,b)} + e_{(b,a)} - e_{(a,b)} - e_{(b,a)} = 0$. Thus all 2-paths of the form $e_{(a,b,i)} + e_{(b,a,i)} - e_{(a,b,j)} - e_{(b,a,j)}$ belong to $Z_2$, but not to $B_2$ as $\Omega_3$ is trivial. Some linear algebra shows that a basis for $Z_2$ is given by the collection $\{ e_{(a,b,1)} + e_{(b,a,1)} - e_{(a,b,j)} - e_{(b,a,j)} : 2\leq j \leq n\}$. It follows that $\tilde{\beta}_2 = n-1$. To conclude the proof, observe that all of these arguments hold for the downlinked mutual dyad by replacing terms of the form $e_{(a,b,i)}$ with $e_{(i,a,b)}$.  \end{proof}

\section{\label{sec:Algorithm}Algorithm}

Although an algorithm to compute non-regular path homology is fairly obvious, our implementation \cite{ph-code} is apparently among the first for dimension $> 1$. Though \cite{Shajii2013,Slawinski2019} are considerably older, we were unaware of these for some time and could not find any published work drawing on them, so we outline our approach here.

First, to reduce computation, we remove all nonbranching limbs (i.e. chains of vertices of total degree $2$ terminating in leaves of degree $1$), since these do not affect homology (Theorem 5.1 of \cite{Grigoryan2012}). Similarly (see Prop. 3.25 of \cite{Grigoryan2012}), we break the graph into weakly connected components and compute homology componentwise. For each component $D$, we extend an order on vertices $V(D) \equiv [n]$ to the lexicographical ordering on paths. We inductively construct $\mathcal{A} _p(D)$ for \(0 \le p \le p _{\max}\) as follows: \(\mathcal{A} _0(D) = V(D)\), and $\mathcal{A} _p(D)$ is constructed by appending to every path in $\mathcal{A} _{p - 1}(D)$ every vertex that has an arc from the path's terminal vertex. The paths are constructed in lexicographical order for each $p$.

From here, we compute the indices (using a radix-$n$ expansion) that specify the inclusion $\mathcal{A}_p \hookrightarrow V^{p+1}$ under lexicographical ordering. 
We then construct the matrix representation $\partial_{[p,\mathcal{A}]}$ of the restriction of $\partial_{[p]}$ to $\mathbb{F}^{\mathcal{A}_p}$ using the standard basis. Let $\nabla_{[p,\mathcal{A}]}$ be the projection of $\partial _{[p, \mathcal{A}]}$ onto $\mathbb{F} ^{V ^p \backslash \mathcal{A} _{p - 1}}$ (i.e., the matrix obtained by removing rows of $\partial_{[p,\mathcal{A}]}$ that correspond to elements of $\mathcal{A}_{p-1}$), and $\Delta _{[p, \mathcal{A}]}$ be the projection onto $\mathbb{F} ^{A _{p-1}}$. The kernel of $\nabla_{[p,\mathcal{A}]}$ is $\Omega_p$. In practice, we remove rows of $\nabla_{[p,\mathcal{A}]}$ that are identically zero before computing this kernel, which yields $\Omega_p$ much more efficiently.

With a matrix representation $\Omega_{[p,\mathcal{A}]}$ for the kernel above in hand, we compute the chain boundary operator \(\partial _p = \Omega _{[p - 1, \mathcal{A}]} ^{-1} \Delta _{[p, \mathcal{A}]} \Omega _{[p, \mathcal{A}]}\) (i.e., the projection of $\partial _{[p, \mathcal{A}]}$ onto $\mathbb{F} ^{\mathcal{A} _{p - 1}}$, projected onto and restricted to the invariant space). We then compute the homology of this chain complex, e.g. we use the rank-nullity theorem and compute the matrix ranks to obtain Betti numbers, or we compute representatives for the homology groups (albeit somewhat lacking in geometric meaning) as the cokernels of $[\ker \partial _{p}] ^{T} \partial _{p + 1}$, or take the Smith normal form of the boundary matrices to find any torsion over $\mathbb{Z}$ (cf. Fig. \ref{fig:torsion}).
\footnote{
Further optimizations are imaginable. We can pre-process the graph more, in accord with Theorem 5.7 of \cite{Grigoryan2012}, and deal with the fallout in low dimensions.
Instead of performing a singular value decomposition on rather large boundary matrices (\emph{en route} to computing the rank), we can recursively build up the invariant spaces from lower dimensions --- each sub-path of an invariant path is itself an invariant path (since we assume no loops). A simple algorithm for this latter approach might be to check every pair of paths in dimension $p$ against every vertex to see where we can append `triangles' and `squares' (terminology borrowed from \cite{Grigoryan2012}); while promising, this approach generates too many paths, and reducing it to a basis is computationally nontrivial. For low dimensions, we could also directly compute Betti numbers from the digraph itself (e.g. Proposition 3.24 of \cite{Grigoryan2012}).
}

\section{\label{sec:Phenomenology}Phenomenology of path homology for small digraphs}

\paragraph{Small digraphs.}

In Fig. \ref{fig:DGsOn4Vertices}, we illustrate digraphs on 4 vertices with nontrivial homology in dimensions greater than 1. Surprisingly, nontrivial homology arises even in dimension 3 with just 4 vertices. Additionally, the left panel of Fig. \ref{fig:DAGsAndUGsOn6Vertices} shows directed acyclic graphs (DAGs) on 6 vertices with nontrivial homology in dimension 2. Observations from these DAGs led us to formulate and prove a conjecture about the path homology of deep feedforward neural networks \cite{phmlp}. Finally, the right panel of Fig. \ref{fig:DAGsAndUGsOn6Vertices} shows undirected graphs (considered as digraphs) on 6 vertices with nontrivial homology in dimension 2, highlighting that path homology is relevant to the undirected case as well.

\begin{figure}[htbp]
\centering
\includegraphics[trim = 75mm 120mm 75mm 120mm, clip, width=.45\textwidth,keepaspectratio]{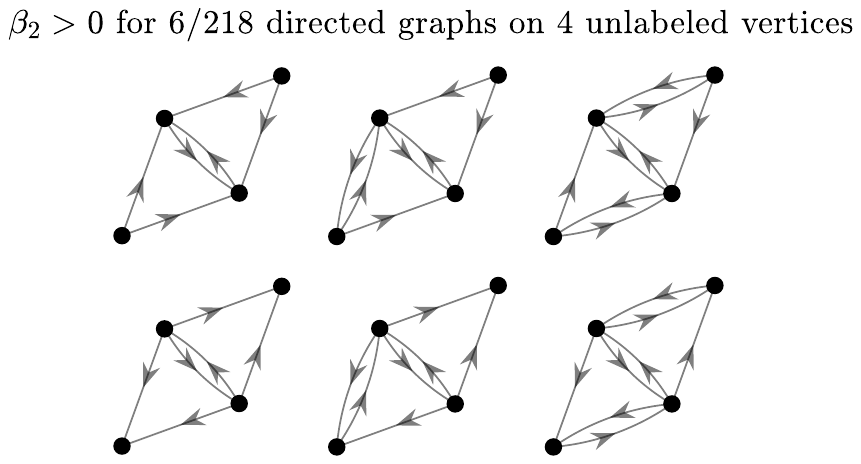}
\quad \quad
\includegraphics[trim = 75mm 120mm 75mm 120mm, clip, width=.45\textwidth,keepaspectratio]{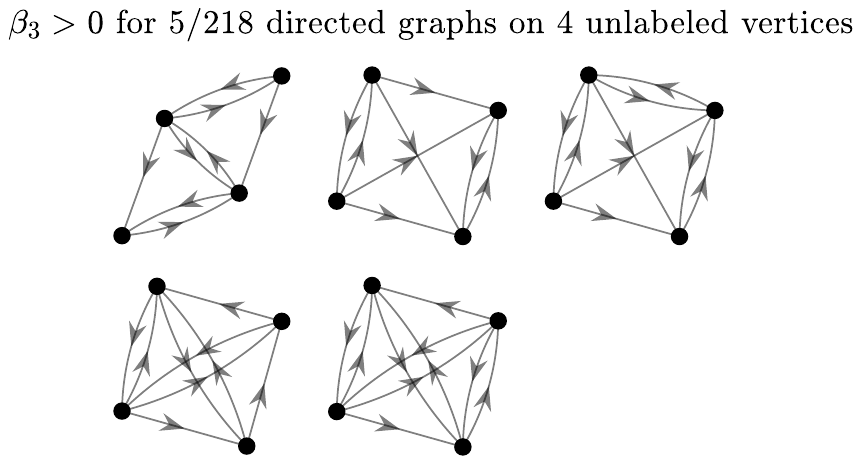}
\caption{
(L) $\tilde \beta_2 > 0$ for these 6 (of 218 total) digraphs on 4 vertices. In each case $\tilde \beta_p = \delta_{p,2}$. 
(R) $\tilde \beta_3 > 0$ for these 5 digraphs on 4 vertices. In each case $\tilde \beta_p = \delta_{p,3}$.
}
\label{fig:DGsOn4Vertices}
\end{figure} %

\begin{figure}[htbp]
\centering
\includegraphics[trim = 50mm 95mm 50mm 95mm, clip, width=.45\textwidth,keepaspectratio]{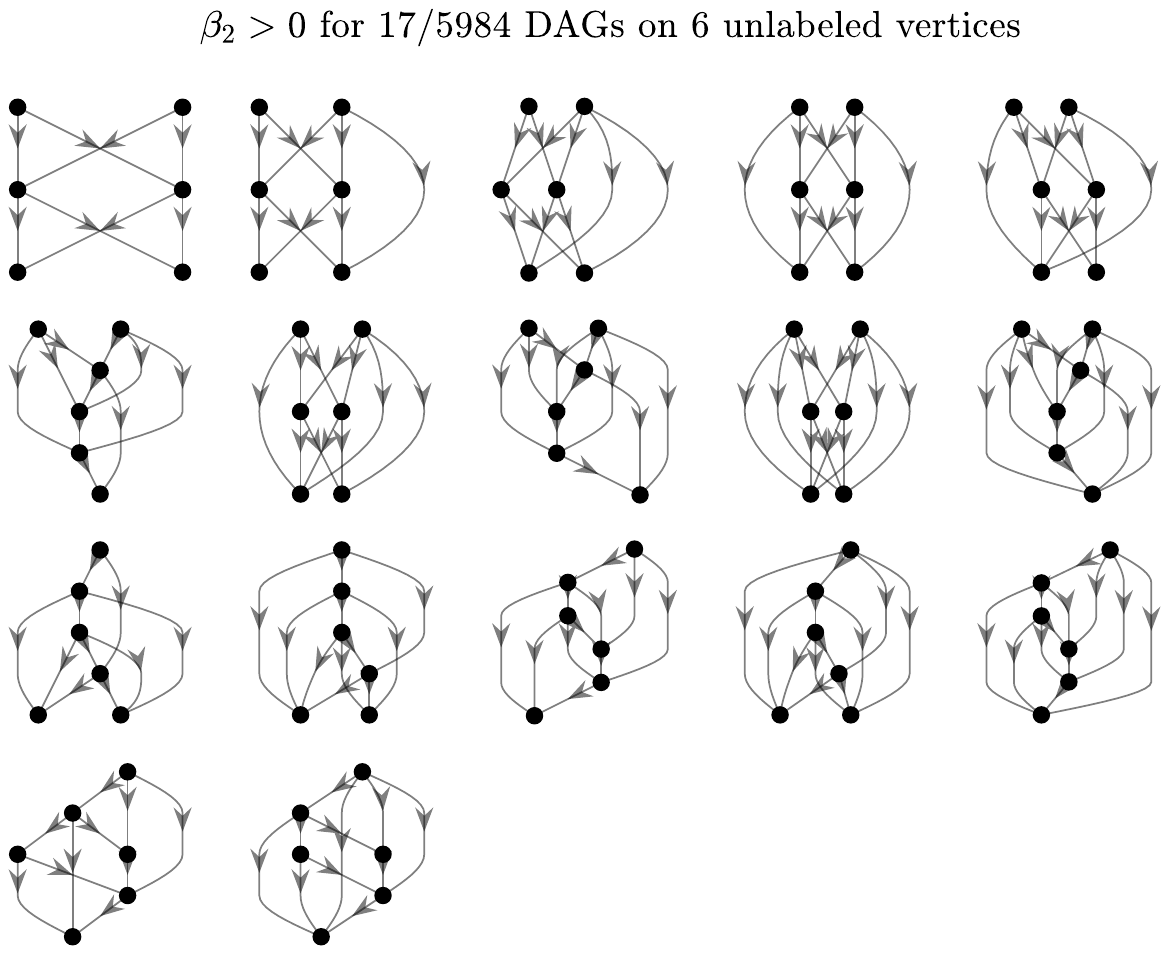}
\quad \quad
\includegraphics[trim = 50mm 95mm 50mm 95mm, clip, width=.45\textwidth,keepaspectratio]{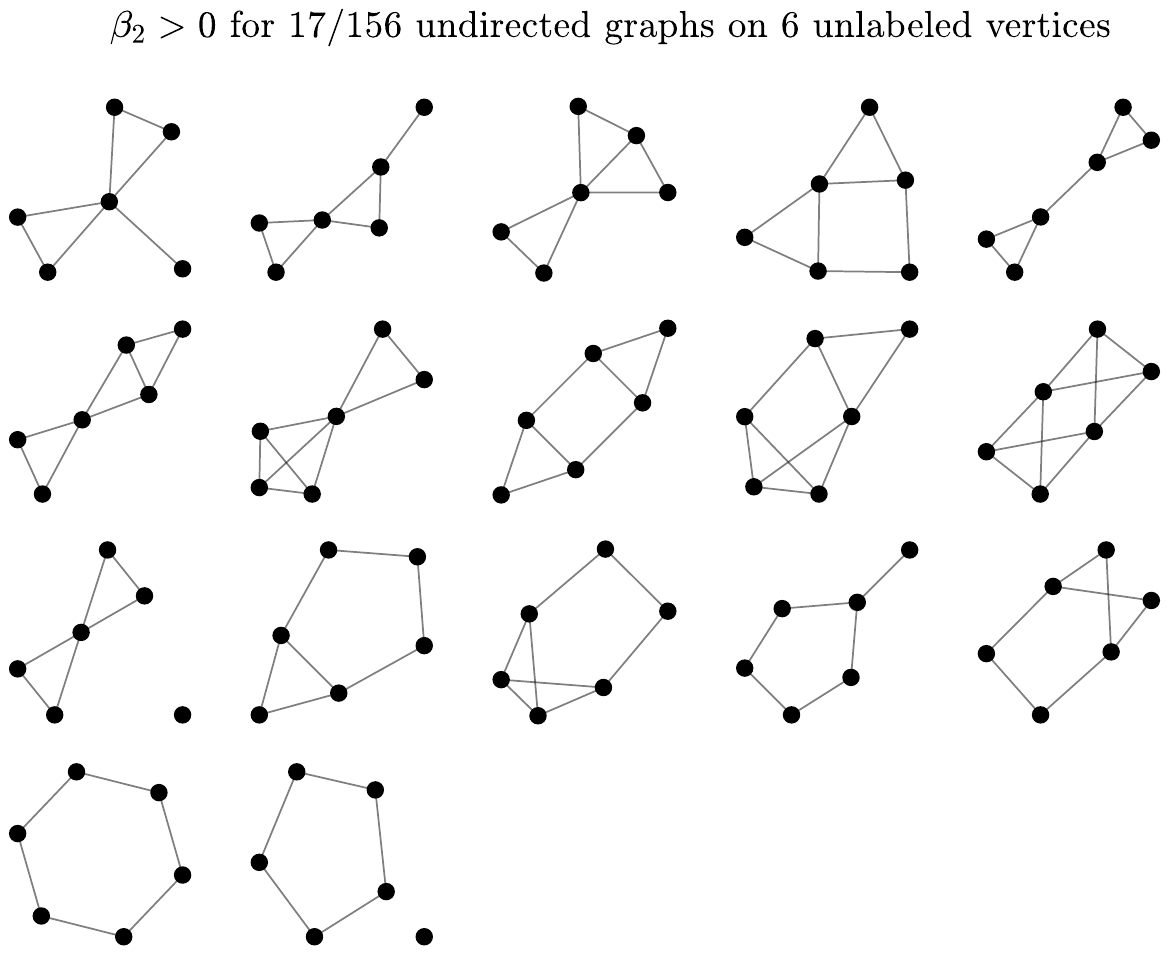}
\caption{
(L) $\tilde \beta_2 > 0$ for these 17 (of 5984 total) DAGs on 6 vertices. 
(R) $\tilde \beta_2 > 0$ for these 17 (of 156 total) undirected graphs on 6 vertices. 
}
\label{fig:DAGsAndUGsOn6Vertices}
\end{figure} %

\begin{figure}[htbp]
\centering
\includegraphics[trim = 95mm 130mm 90mm 130mm, clip, width=.25\textwidth,keepaspectratio]{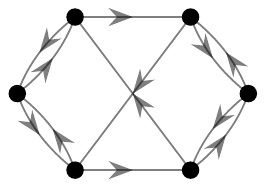}
\quad \quad
\includegraphics[trim = 95mm 130mm 90mm 130mm, clip, width=.25\textwidth,keepaspectratio]{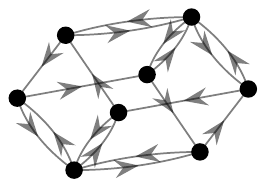}%
\caption{
These two digraphs exhibit torsion. Larger digraphs with torsion can be formed by adding more vertices to the central cycles of these and linking them each to one of the two vertices external to the cycle (in an alternating fashion). We conjecture that the digraph in this family with a central cycle of length $2 n$ has a torsion subgroup of $\mathbb{Z} / n \mathbb{Z}$ in $\tilde H_1$. This conjecture has been computationally verified up to $n = 8$. These digraphs may be analogues of so-called \emph{lens spaces} that can be formed by gluing two tori together with a twist, and that themselves exhibit similar torsion.
}
\label{fig:torsion}
\end{figure} %

\paragraph{Torsion.}

Though heretofore defined over fields, path homology still makes sense over rings, e.g. $\mathbb{Z}$. By sampling Erd\H{o}s-R\'{e}nyi digraphs, M. Yutin \cite{ph-torsion} was able to find a family of digraphs with torsion. In Fig. \ref{fig:torsion}, we show the smallest two digraphs in this family; larger digraphs are formed by adding more vertices to the central cycle and linking to one of the two vertices external to the cycle (in an alternating fashion). We conjecture that the digraph in this family with a central cycle of length $2n$ has a torsion subgroup of $\mathbb{Z} / n \mathbb{Z}$ in one-dimensional path homology. This conjecture has been computationally verified up to $n = 8$.


\paragraph{Erd\H{o}s-R\'{e}nyi random graphs.}

In Fig. \ref{fig:ER4} we plot empirical distributions for the first few Betti numbers of Erd\H{o}s-R\'{e}nyi random graphs \cite{FriezeKaronski} on 4 nodes. 
A standard application of these distributions could be to test if a stochastic digraph generating process could be modeled via a random graph generating model.


\begin{figure}[htbp]
\centering
\includegraphics[trim = 50mm 127mm 55mm 120mm, clip, width=0.8\textwidth
,keepaspectratio]{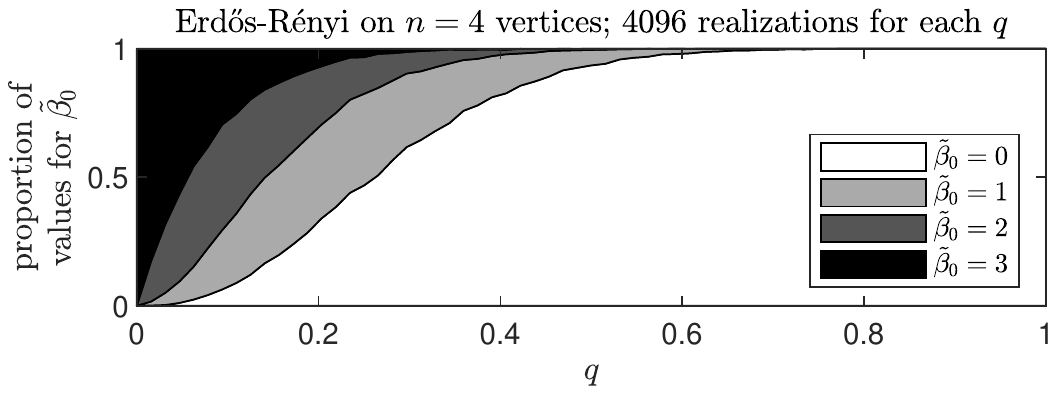}
\\
\includegraphics[trim = 50mm 127mm 55mm 123.9mm, clip, width=0.8\textwidth
,keepaspectratio]{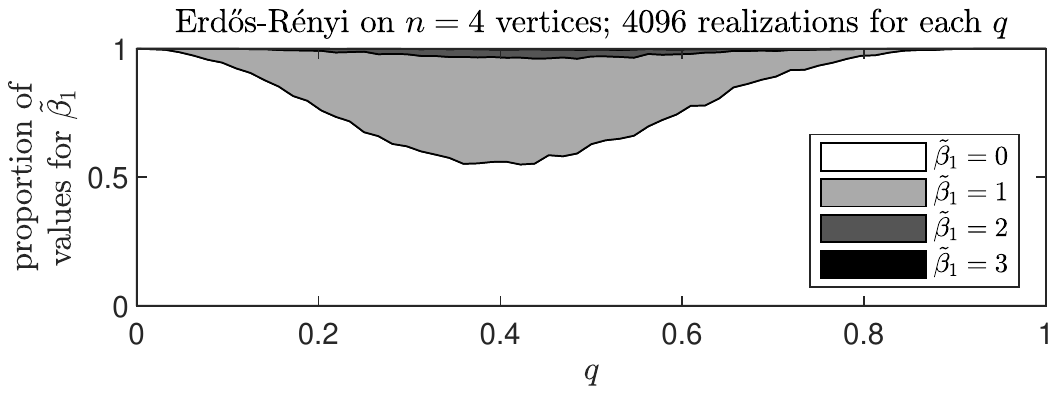}
\\
\includegraphics[trim = 50mm 127mm 55mm 123.9mm, clip, width=0.8\textwidth
,keepaspectratio]{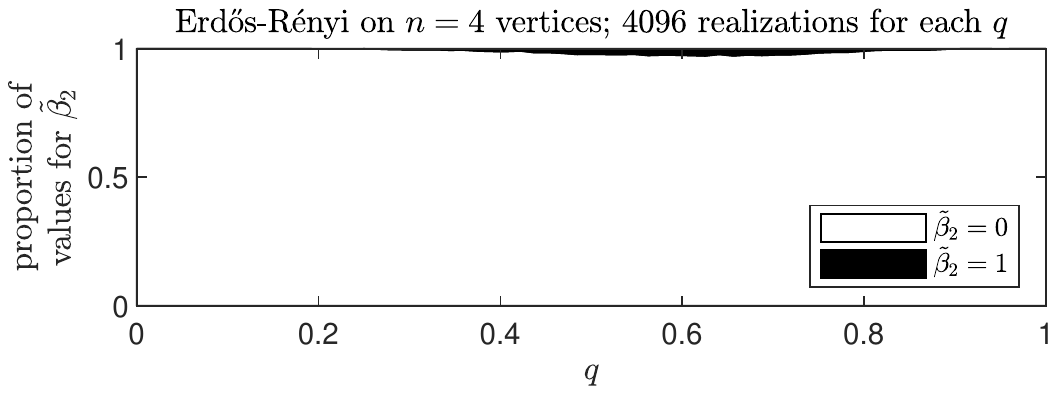}
\\
\includegraphics[trim = 50mm 120mm 55mm 123.9mm, clip, width=0.8\textwidth
,keepaspectratio]{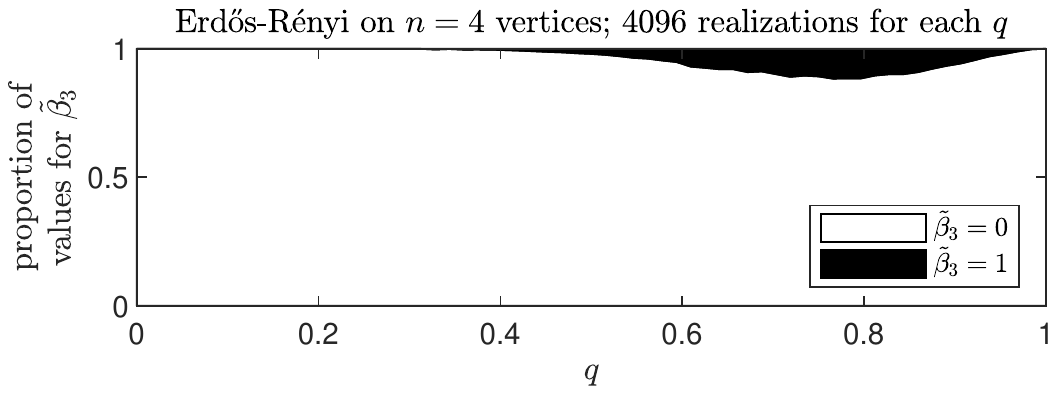}
\caption{ \label{fig:ER4} Empirical distributions of $\tilde \beta_p(D_{4,q})$.
} 
\end{figure}


\section{\label{sec:TemporalNetworks}Examples of applications to temporal networks}

We analyze three \emph{temporal networks} \cite{Holme,MasudaLambiotte}, more specifically \emph{directed contact networks} as described in \cite{CybenkoHuntsman}: MathOverflow, an email network, and activity on a Facebook group: these illustrate how path homology can find high-order interactions that are respectively indicators of dilution, recurring motifs, and concentration within network behavior.

\subsection{\label{sec:MathOverflow}MathOverflow}

To illustrate the ability of path homology to identify structurally relevant motifs, we analyzed the answer-to-question portion of the \texttt{sx-mathoverflow} temporal network available at \cite{Lescovec2014}. This network has 21688 vertices and 107581 directed temporal contacts, spanning 2350 days. It has previously been analyzed in \cite{MontoyaMaMondragon2013}; for a discussion of question/answer phenomenology on MathOverflow, see \cite{TausczikKitturKraut2014}.

We filtered the contacts through a sliding time window of 24 hours, moving every eight hours, and aggregated each window into a static digraph. We then computed the first three Betti numbers.
\footnote{
NB. Our path homology code removes any loops from digraphs.
}
Only two windows, immediately adjacent and overlapping, had $\beta_2 > 0$, corresponding to a period over 13-14 Oct 2009. Subsequent inspection of homology representatives revealed that this phenomenon originated in the presence of the 2-downlinked mutual dyad (cf. Sec. \ref{sec:Dyad}) motif presented in Fig. \ref{fig:MathOverflow1}, and additional inspection of MathOverflow itself revealed the particular questions and answers involved.


\begin{figure}[htbp]
\centering
\includegraphics[trim = 50mm 105mm 50mm 100mm, clip, width=0.75\textwidth,keepaspectratio]{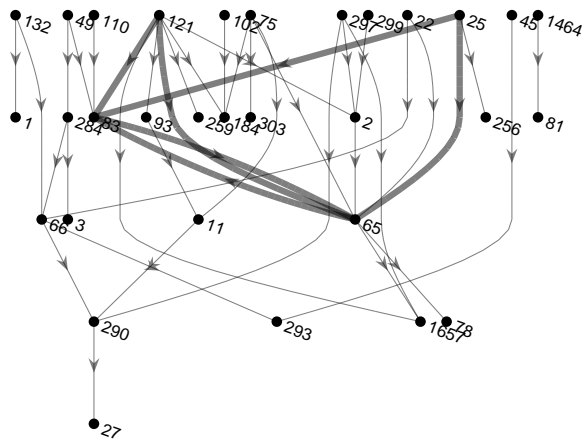}
\caption{ \label{fig:MathOverflow1} Digraph of activity on MathOverflow over a 24-hour period during 13-14 Oct 2009. Vertices are labeled by user ID; arcs are directed from answerer to questioner (any parallel arcs are merged). Arcs participating in a 2-homology representative are highlighted, with pairs of arcs and associated questions $((25,65),437)$, $((25,83),451)$, $((65,83),446)$, $((83,65),437)$, $((121,65),433)$, and $((121,83),446)$ as well as $((121,83),451)$. Three of the four users $\{25,65,83,121\}$ share the same first subject tag at time of writing; all four share the same second subject tag. 
} 
\end{figure}

The rarity of 2-homology is related to the fact that it happened very early in the history of MathOverflow--in fact, just two weeks after its beginning. As MathOverflow changed over time, opportunities for such tightly coupled patterns of questions and answers diminished. For example, most of the first 200 users asked and answered many fewer questions over time, while the overall size of and activity on MathOverflow grew much larger.

\subsection{\label{sec:Email}An email network} 

The phenomenon of the previous example is actually ubiquitous and generalized in email networks, for reasons attributable to well-known behaviors unique to the medium. We analyzed the \texttt{email-Eu-core-temporal} network available at \cite{Lescovec2014}. This network has 986 vertices and 332334 directed temporal contacts, spanning 804 days of activity. 
We filtered the contacts through a sliding window of the most recent 100 contacts, moving every 50 contacts, and again aggregated each window into a static digraph. We then computed the first three Betti numbers. Many windows exhibited very high values of $\beta_2$ due to instances of the $n$-uplinked mutual dyad (cf. Sec. \ref{sec:Dyad}) motif shown in Fig. \ref{fig:EmailMotif}. 

\begin{figure}[htbp]
\centering
	\includegraphics[trim = 45mm 110mm 45mm 110mm, clip, width=0.6\textwidth,keepaspectratio]{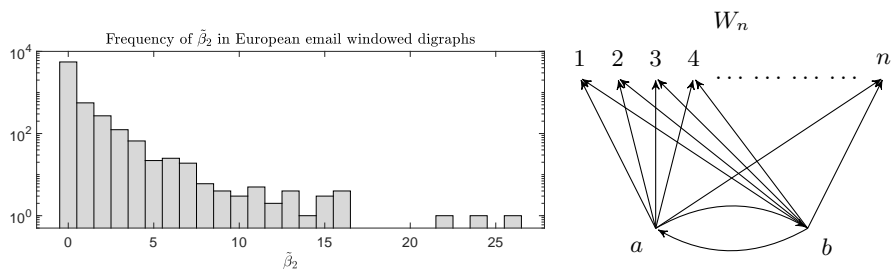}
	\ \
	\begin{tikzpicture}[every node/.style={inner sep=0,outer sep=0,scale=1},->,>=stealth',shorten >=1pt]
		\coordinate (vA) at (-1,0);
		\coordinate (vB) at (1,0);
		\coordinate (v1) at (-2,2);
		\coordinate (v2) at (-1.5,2);
		\coordinate (v3) at (-1,2);
		\coordinate (v4) at (-.5,2);
		\coordinate (v9) at (2,2);
		\draw (vA) to [out=30,in=150,looseness=1] (vB);
		\draw (vB) to [out=-150,in=-30,looseness=1] (vA);
		\foreach \from/\to in {
			vA/v1, vA/v2, vA/v3, vA/v4, vA/v9, vB/v1, vB/v2, vB/v3, vB/v4, vB/v9}
			\draw (\from) to (\to);
		\node (blank) at (0,-.67) {$$};
		\node (labelA) at (-1.25,-.25) {$a$};
		\node (labelB) at (1.25,-.25) {$b$};
		\node (label1) at (-2,2.25) {$1$};
		\node (label2) at (-1.5,2.25) {$2$};
		\node (label3) at (-1,2.25) {$3$};
		\node (label4) at (-.5,2.25) {$4$};
		\node (dots5) at (0,2) {$\dots$};
		\node (dots6) at (.5,2) {$\dots$};
		\node (dots7) at (1,2) {$\dots$};
		\node (dots8) at (1.5,2) {$\dots$};
		\node (label9) at (2,2.25) {$n$};
		\node (Wn) at (0,2.75) {$W_n$};
	\end{tikzpicture}
\caption{ \label{fig:EmailMotif} (L) The distribution of $\tilde \beta_2$ for windowed digraphs obtained from the email network. (R) The digraph $W_n$ depicted here has $\tilde \beta_2(W_n) = n-1$, and it is the cause of high values of $\tilde \beta_2$ in windowed digraphs obtained from the email network. The underlying dynamics is common in large organizations: two people (``Alice'' and ``Bob'') both send email to the same wide distribution and to each other.
} 
\end{figure}

\subsection{\label{sec:Facebook}A Facebook group}

As a final example, we consider the first 1000 days of activity on a Facebook group \cite{ViswanathEtAl,Kunegis}. Because the associated temporal network (13295 vertices; 187750 contacts) has a daily lull with virtually no activity, we aggregated the temporal network into daily digraphs. Fig. \ref{fig:FacebookStatistics} shows the number of posts per day and the first three Betti numbers. Besides the obvious correlation between activity and $\tilde \beta_0$, the appearance of progressively more- and higher-dimensional homology classes over time is also evident, indicating the emergence of higher-order network structure. Fig. \ref{fig:FacebookDigraph} shows the first daily digraph with $\tilde \beta_2 > 0$.

\section{\label{sec:Remarks}Remarks}

Although the computational requirements for path homology scale exponentially with dimension $p$, even the case $p = 2$ can highlight salient network structure and behavior. By decomposing temporal networks into time windows, path homology can be successfully brought to bear in this regard, illuminating both motifs with nontrivial path homology as well as the temporal networks themselves.

\begin{figure}[htbp]
	\includegraphics[trim = 70mm 110mm 70mm 110mm, clip, width=0.5\textwidth,keepaspectratio]{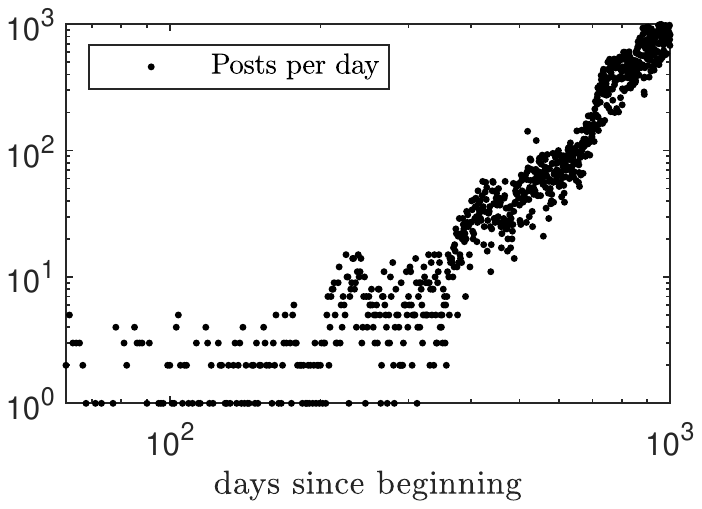}
	\includegraphics[trim = 70mm 110mm 70mm 110mm, clip, width=0.5\textwidth,keepaspectratio]{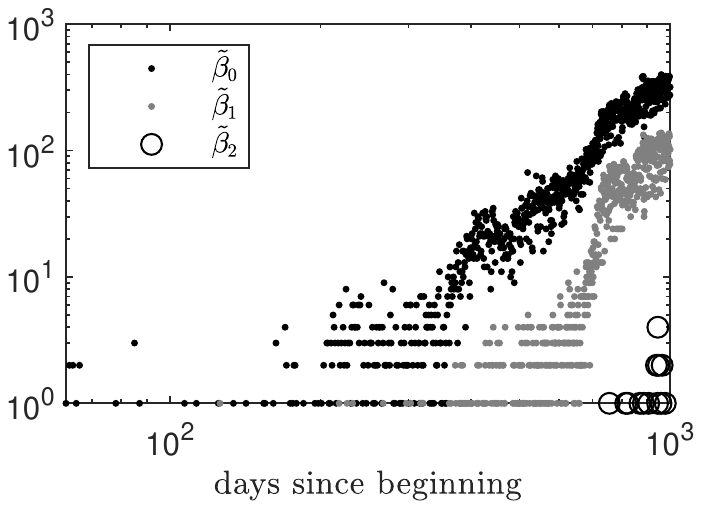}
\caption{ \label{fig:FacebookStatistics} (L) Daily Facebook group posts. (R) Betti numbers of daily digraphs. As activity increases, so do topological features in dimensions 0 through 2.
}
\end{figure}

\begin{figure}[htbp]
	\includegraphics[trim = 65mm 105mm 60mm 100mm, clip, width=0.5\textwidth,keepaspectratio]{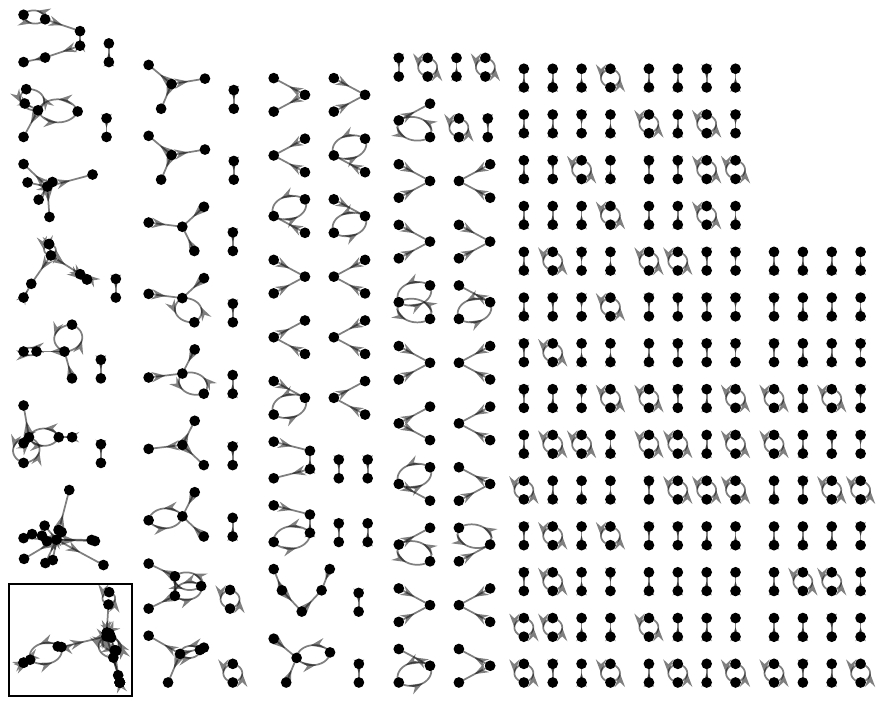}
	\includegraphics[trim = 65mm 105mm 60mm 100mm, clip, width=0.5\textwidth,keepaspectratio]{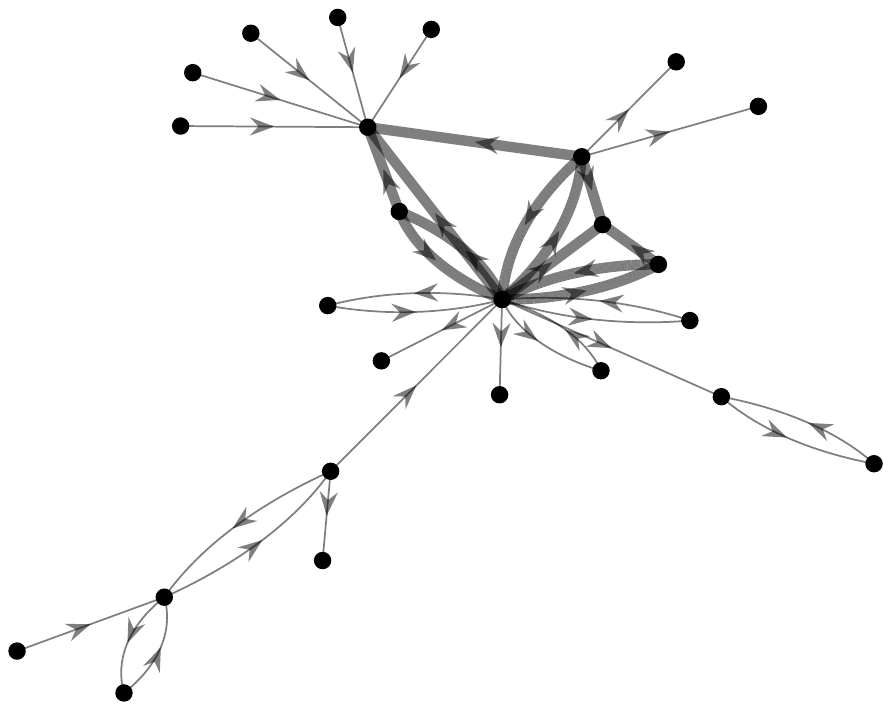}
\caption{ \label{fig:FacebookDigraph} (L) First daily digraph with $\tilde \beta_2 > 0$: day $756$. The only weak component with $\tilde \beta_2 > 0$ is indicated with a box. (R) Detail with (different graph layout and) arcs representing $\tilde H_2$ highlighted. This homology representative is highly symmetrical.
}
\end{figure}

\section*{\label{sec:acknowledgements}Acknowledgements}

The authors thank Michael Robinson for many helpful discussions. This material is based upon work partially supported by the Defense Advanced Research Projects Agency (DARPA) and the Air Force Research Laboratory (AFRL). Any opinions, findings and conclusions or recommendations expressed in this material are those of the author(s) and do not necessarily reflect the views of DARPA or AFRL.


%
%
%
%
%
%
\end{document}